\documentclass{article}


\usepackage[utf8]{inputenc} 
\usepackage[T1]{fontenc}    
\usepackage{booktabs}       
\usepackage{nicefrac}       
\usepackage{microtype}      
\usepackage{times}             

\usepackage[round]{natbib}
\usepackage{amssymb,amsmath,amsthm,bbm}
\usepackage[margin=1in]{geometry}
\usepackage{verbatim,float,url,dsfont}
\usepackage{graphicx,subfigure,psfrag}
\usepackage{algorithm,algorithmic}
\usepackage{mathtools,enumitem}
\usepackage[colorlinks=true,citecolor=blue,urlcolor=blue,linkcolor=blue]{hyperref}
\usepackage{multirow}

\newtheorem{theorem}{Theorem}
\newtheorem{lemma}{Lemma}
\newtheorem{corollary}{Corollary}

\theoremstyle{definition}
\newtheorem{remark}{Remark}
\newtheorem{definition}{Definition}

\def\R{\mathbb{R}}

\def\P{\mathbb{P}}

\def\cA{\mathcal{A}}

\def\cS{\mathcal{S}}

\makeatletter
\newcommand*\rel@kern[1]{\kern#1\dimexpr\macc@kerna}
\newcommand*\widebar[1]{%
  \begingroup
  \def\mathaccent##1##2{%
    \rel@kern{0.8}%
    \overline{\rel@kern{-0.8}\macc@nucleus\rel@kern{0.2}}%
    \rel@kern{-0.2}%
  }%
  \macc@depth\@ne
  \let\math@bgroup\@empty \let\math@egroup\macc@set@skewchar
  \mathsurround\z@ \frozen@everymath{\mathgroup\macc@group\relax}%
  \macc@set@skewchar\relax
  \let\mathaccentV\macc@nested@a
  \macc@nested@a\relax111{#1}%
  \endgroup
}
\makeatother
\def\score{\mathcal{S}}

\def\quant{\mathrm{Quantile}}
\def\isim{\overset{\mathrm{i.i.d.}}{\sim}}
\def\eqd{\overset{d}{=}}
\def\d{\mathsf{d}}

\title{Conformal Prediction Under Covariate Shift}

\author{Ryan J. Tibshirani \and Rina Foygel Barber \and Emmanuel J. Cand{\`e}s \and Aaditya Ramdas }  

\date{}

\begin{document}
\maketitle

\begin{abstract}
We extend conformal prediction methodology beyond the case of exchangeable
data. In particular, we show that a weighted version of conformal prediction can
be used to compute distribution-free prediction intervals for problems in which
the test and training covariate distributions differ, but the likelihood ratio
between these two distributions is known---or, in practice, can be estimated
accurately with access to a large set of unlabeled data (test covariate
points). Our weighted extension of conformal prediction also applies more
generally, to settings in which the data satisfies a certain weighted notion of
exchangeability. We discuss other potential applications of our new conformal
methodology, including latent variable and missing data problems.
\end{abstract}

\section{Introduction}

Let $(X_i,Y_i) \in \R^d \times \R$, $i=1,\ldots,n$ denote training data that is
assumed to be i.i.d.\ from an arbitrary distribution $P$.  Given a desired
coverage rate $1-\alpha \in (0,1)$, consider the problem of constructing a 
band \smash{$\widehat{C}_n : \R^d \to \{\text{subsets of $\R$}\}$}, based on the
training data such that, for a new i.i.d.\ point $(X_{n+1},Y_{n+1})$,
\begin{equation}
\label{eq:goal} 
\P\Big\{ Y_{n+1} \in \widehat{C}_n(X_{n+1}) \Big\} \geq 1-\alpha,
\end{equation} 
where this probability is taken over the $n+1$ points $(X_i,Y_i)$,
$i=1,\ldots,n+1$ (the $n$ training points and the test point). Crucially, we
will require \eqref{eq:goal} to hold with no assumptions whatsoever on the
common distribution $P$. 

{\it Conformal prediction}, a framework pioneered by Vladimir Vovk and
colleagues in the 1990s, provides a means for achieving this goal, relying only
on exchangeablility of the training and test data. The definitive reference
is the book by \citet{vovk2005algorithmic}; see also
\citet{vovk2008tutorial,vovk2009online,vovk2013transductive,
burnaev2014efficiency}, and {\small \url{http://www.alrw.net}} for an
often-updated list of conformal prediction work by Vovk and colleagues.
Moreover, we refer to \citet{lei2014distribution,lei2018distribution} for recent  
developments in the areas of nonparametric and high-dimensional regression.  In
this work, we extend conformal prediction beyond the setting of exchangeable  
data, allowing for provably valid inference even when the training and test data
are not drawn from the same distribution.  We begin by reviewing the basics of
conformal prediction, in this section.  In Section \ref{sec:cov_shift}, we
describe an extension of conformal prediction to the setting of covariate shift,
and give supporting empirical results.  In Section \ref{sec:weight_exch}, we
cover the mathematical details behind our conformal extension.  We conclude in
Section \ref{sec:discuss}, and discuss several other possible applications of
our new conformal methodology.  

\subsection{Quantile lemma}

Before explaining the basic ideas behind conformal inference
(i.e., conformal prediction, we will use these two terms interchangeably), we
introduce some notation. We denote by $\quant(\beta; F)$ the level $\beta$
quantile of a distribution $F$, i.e., for $Z \sim F$,  
$$
\quant(\beta; F) = \inf\big\{z : \P\{Z \leq z\} \geq \beta\big\}.
$$
In our use of quantiles, we will allow for distributions $F$ on the augmented
real line, $\R \cup \{\infty\}$. For values $v_1,\ldots,v_n$, we write $v_{1:n}
= \{v_1,\ldots,v_n\}$ to denote their multiset. Note that this is unordered, and
allows for multiple instances the same element; thus in the present case, if
$v_i=v_j$ for $i\neq j$, then this value appears twice in $v_{1:n}$.  To denote
quantiles of the empirical distribution of the values $v_1,\ldots,v_n$, we
abbreviate
$$
\quant(\beta; v_{1:n}) = \quant\bigg(\beta ; \, \frac{1}{n}
\sum_{i=1}^n \delta_{v_i} \bigg),
$$
where $\delta_a$ denotes a point mass at $a$ (i.e., the distribution that places
all mass at the value $a$). The next result is a simple but key component
underlying conformal prediction. 

\begin{lemma}
\label{lem:quant}
If $V_1,\ldots,V_{n+1}$ are exchangeable random variables, then for any $\beta
\in (0,1)$, we have 
$$
\P\Big\{V_{n+1} \leq \quant\big(\beta; V_{1:n}\cup\{\infty\}\big) \Big\}  
\geq \beta.
$$
Furthermore, if ties between $V_1,\ldots,V_{n+1}$ occur with probability 
zero, then the above probability is upper bound by $\beta + 1/(n+1)$.
\end{lemma}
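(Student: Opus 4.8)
The plan is to convert the quantile event into a purely combinatorial statement about ranks, and then extract the bound from exchangeability via a deterministic counting argument. Set $k = \lceil \beta(n+1) \rceil$; since $\beta \in (0,1)$ we have $k \in \{1,\ldots,n+1\}$. The multiset $V_{1:n} \cup \{\infty\}$ consists of $n+1$ atoms, each carrying empirical mass $1/(n+1)$, so its level-$\beta$ quantile is exactly its $k$-th smallest element. Because $\infty$ dominates every (finite) $V_i$, this $k$-th smallest element is the $k$-th order statistic of $V_1,\ldots,V_n$ when $k \le n$, and equals $\infty$ when $k = n+1$. First I would record the resulting equivalence: writing $N = |\{\, i \in \{1,\ldots,n\} : V_i < V_{n+1} \,\}|$ for the number of training scores strictly below $V_{n+1}$, one checks that
$$
\big\{ V_{n+1} \le \quant(\beta; V_{1:n} \cup \{\infty\}) \big\} = \{ N \le k-1 \},
$$
with both events holding trivially when $k = n+1$.

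Next I would prove the lower bound through exchangeability. For each $j \in \{1,\ldots,n+1\}$ let $N_j = |\{\, i : V_i < V_j \,\}|$ count, among all $n+1$ variables, how many lie strictly below $V_j$; note $N = N_{n+1}$. The key deterministic observation is that at most $n+1-k$ indices $j$ can satisfy $N_j \ge k$: if $N_j \ge k$ then $V_j$ strictly exceeds at least $k$ of the values and hence exceeds the $k$-th smallest of all $n+1$ values, and only $n+1-k$ indices have this property. Thus $\sum_{j=1}^{n+1} \mathbf{1}\{N_j \ge k\} \le n+1-k$ holds pointwise, i.e.\ for every realization. Taking expectations and using exchangeability, which forces $\P\{N_j \ge k\}$ to be the same for every $j$, gives $(n+1)\,\P\{N_{n+1} \ge k\} \le n+1-k$. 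Hence $\P\{N \le k-1\} \ge k/(n+1) \ge \beta$, since $k = \lceil \beta(n+1)\rceil \ge \beta(n+1)$; combined with the equivalence this is the claimed lower bound.

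For the upper bound under the no-ties assumption, I would observe that when $V_1,\ldots,V_{n+1}$ are almost surely distinct the counts $N_1,\ldots,N_{n+1}$ are almost surely a permutation of $\{0,1,\ldots,n\}$. Consequently the pointwise counting inequality is an almost-sure \emph{equality}, so the expectation step yields $\P\{N_{n+1} \ge k\} = (n+1-k)/(n+1)$ exactly, whence $\P\{N \le k-1\} = k/(n+1)$. Since $\lceil \beta(n+1) \rceil < \beta(n+1) + 1$, this equals $k/(n+1) < \beta + 1/(n+1)$, giving the stated upper bound.

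The main obstacle I anticipate is bookkeeping around ties and the attendant off-by-one issues: pinning down that the quantile is the $k$-th order statistic, that the relevant event is governed by the \emph{strict} inequality $V_i < V_{n+1}$, and that the counting bound is correct pointwise (so exchangeability may be applied after taking expectations). Everything downstream is routine once the counting lemma $\sum_{j} \mathbf{1}\{N_j \ge k\} \le n+1-k$ is established and seen to become an equality precisely in the tie-free case.
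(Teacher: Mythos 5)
Your proof is correct and follows essentially the same route as the paper's: both reduce the quantile event to the statement that $V_{n+1}$ has rank at most $k=\lceil\beta(n+1)\rceil$ among the $n+1$ scores, and then use exchangeability to bound the probability of that rank event by $k/(n+1)$. The only differences are expository: where the paper simply asserts the rank-probability bound from exchangeability, you make it rigorous via the pointwise counting inequality $\sum_{j}\mathbf{1}\{N_j\ge k\}\le n+1-k$ (an equality in the tie-free case), and you identify the quantile of $V_{1:n}\cup\{\infty\}$ directly as an order statistic rather than invoking the paper's reassignment trick to pass to $\quant(\beta; V_{1:(n+1)})$.
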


\begin{proof}
Consider the following useful fact about quantiles of a discrete distribution 
$F$, with support points $a_1,\ldots,a_k \in \R$: denoting $q=\quant(\beta; F)$,
if we reassign the points $a_i > q$ to arbitrary values strictly larger than
$q$, yielding a new distribution \smash{$\widetilde{F}$}, then the level $\beta$ 
quantile remains unchanged, \smash{$\quant(\beta; F) = \quant(\beta; 
\widetilde{F})$}. Using this fact,    
$$
V_{n+1} > \quant\big(\beta; V_{1:n}\cup\{\infty\}\big) 
\iff V_{n+1} > \quant\big(\beta; V_{1:(n+1)}\big),
$$
or equivalently,
\begin{equation}
\label{eq:inf_equiv}
V_{n+1} \leq \quant\big(\beta; V_{1:n}\cup\{\infty\}\big) 
\iff V_{n+1} \leq \quant\big(\beta; V_{1:(n+1)}\big).
\end{equation}
Moreover, it is straightforward to check that
$$
V_{n+1} \leq \quant\big(\beta; V_{1:(n+1)}\big) \iff 
\text{$V_{n+1}$ is among the $\lceil \beta (n+1) \rceil$ smallest of 
  $V_1,\ldots,V_{n+1}$}. 
$$
By exchangeability, the latter event occurs with probability at least $\lceil
\beta (n+1) \rceil / (n+1) \geq \beta$, which proves the lower bound; when there
are almost surely no ties, it holds with probability exactly $\lceil \beta (n+1)
\rceil / (n+1) \leq \beta + 1/(n+1)$, which proves the upper bound.
\end{proof}

\subsection{Conformal prediction}

We now return to the regression setting. Denote $Z_i=(X_i,Y_i)$, $i=1,\ldots,n$,
and $Z_{1:n}=\{Z_1,\ldots,Z_n\}$. We will use the abbreviation $Z_{-i} = Z_{1:n}
\setminus \{Z_i\}$. In what follows, we describe the construction of a 
prediction band satisfying \eqref{eq:goal}, using conformal inference, due
to \citet{vovk2005algorithmic}. We first choose a score function $\score$, whose 
arguments consist of a point $(x,y)$, and a multiset $Z$.\footnote{We
  emphasize that by defining $Z$ to be a multiset, we are treating its  
  points as unordered. Hence, to be perfectly explicit, the score function
  $\score$ cannot accept the points in $Z$ in any particular order, and it must
  take them in as unordered.  The same is true of the base algorithm $\cA$ used
  to define the fitted regression function \smash{$\widehat\mu$}, in the choice
  of absolute residual score function \eqref{eq:resid}.}
Informally, a low value of $\score((x,y), Z)$ indicates that the point $(x,y)$
``conforms'' to $Z$, whereas a high value indicates that $(x,y)$ is atypical
relative to the points in $Z$. For example, we might choose to define $\score$
by   
\begin{equation}
\label{eq:resid}
\score\big( (x,y), Z \big) = |y - \widehat\mu(x)|, 
\end{equation}
where \smash{$\widehat\mu : \R^d \to \R$} is a regression function that was
fitted by running an algorithm $\cA$ on $(x,y)$ and $Z$.

Next, at each $x \in \R^d$, we define the conformal prediction
interval\footnote{For convenience, throughout, we will refer to
  \smash{$\widehat{C}_n(x)$} as an ``interval'', even though this may actually
  be a union of multiple nonoverlapping intervals. Similarly, for simplicity, we
  will refer to \smash{$\widehat{C}_n$} as a ``band''.}
\smash{$\widehat{C}_n(x)$} by repeating the following procedure for each 
$y \in \R$. We calculate the {\it nonconformity scores}
\begin{equation}
\label{eq:scores}
V_i^{(x,y)} = \score\big( Z_i, Z_{-i} \cup \{(x,y)\} \big), \; i=1,\ldots,n, 
\quad \text{and} \quad
V_{n+1}^{(x,y)} = \score\big( (x,y), Z_{1:n} \big),
\end{equation}
and include $y$ in our prediction interval \smash{$\widehat{C}_n(x)$} if     
$$
V_{n+1}^{(x,y)} \leq \quant\big( 1-\alpha; V_{1:n}^{(x,y)} \cup \{\infty\} \big),
$$
where \smash{$V_{1:n}^{(x,y)}=\{V_1^{(x,y)},\ldots,V_n^{(x,y)}\}$}.
Importantly, the symmetry in the construction of the nonconformity scores
\eqref{eq:scores} guarantees exact coverage in finite samples.  The next theorem
summarizes this coverage result.  The lower bound is a standard result in 
conformal inference, due to \citet{vovk2005algorithmic}; the upper bound, as far
as we know, was first pointed out by \citet{lei2018distribution}. 

\begin{theorem}[\citealt{vovk2005algorithmic,lei2018distribution}]
\label{thm:conf}
Assume that $(X_i,Y_i) \in \R^d \times \R$, $i=1,\ldots,n+1$ are exchangeable. 
For any score function $\score$, and any $\alpha \in (0,1)$, define the
conformal band (based on the first $n$ samples) at $x \in \R^d$ by
\begin{equation}
\label{eq:conf}
\widehat{C}_n(x) = \Big\{ y \in \R: V_{n+1}^{(x,y)} \leq 
\quant\big(1-\alpha; V_{1:n}^{(x,y)} \cup \{\infty\} \big) \Big\}, 
\end{equation}
where \smash{$V_i^{(x,y)}$}, $i=1,\ldots,n+1$ are as defined in
\eqref{eq:scores}. Then \smash{$\widehat{C}_n$} satisfies  
$$
\P\Big\{ Y_{n+1} \in \widehat{C}_n(X_{n+1}) \Big\} \geq 1-\alpha.
$$
Furthermore, if ties between
\smash{$V_1^{(X_{n+1},Y_{n+1})},\ldots, V_{n+1}^{(X_{n+1},Y_{n+1})}$} occur with
probability zero, then this probability is upper bounded by $1-\alpha+1/(n+1)$. 
\end{theorem}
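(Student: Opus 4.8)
The plan is to reduce the coverage statement to the quantile lemma (Lemma~\ref{lem:quant}), applied to the nonconformity scores evaluated at the \emph{true} test point. First I would fix notation: write $Z_{n+1} = (X_{n+1}, Y_{n+1})$ and, for $i = 1, \ldots, n+1$, define
$$
V_i = \score\big(Z_i, \, Z_{1:(n+1)} \setminus \{Z_i\}\big).
$$
These are precisely the scores from \eqref{eq:scores} with $(x,y)$ set to the observed test point: for $i \leq n$ one has $Z_{-i} \cup \{(X_{n+1},Y_{n+1})\} = Z_{1:(n+1)} \setminus \{Z_i\}$, and for $i=n+1$ one has $Z_{1:n} = Z_{1:(n+1)} \setminus \{Z_{n+1}\}$. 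By the definition \eqref{eq:conf} of the band, the event $\{Y_{n+1} \in \widehat{C}_n(X_{n+1})\}$ coincides exactly with $\{V_{n+1} \leq \quant(1-\alpha; V_{1:n} \cup \{\infty\})\}$.

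The crux is to establish that $V_1, \ldots, V_{n+1}$ are exchangeable, after which Lemma~\ref{lem:quant} with $\beta = 1-\alpha$ delivers both bounds immediately. To this end I would exploit the symmetry that $\score$ takes its second argument as a multiset. Define $g$ on $(\R^d \times \R)^{n+1}$ whose $i$-th coordinate is the map $(z_1,\ldots,z_{n+1}) \mapsto \score(z_i, \{z_1, \ldots, z_{n+1}\} \setminus \{z_i\})$, so that $(V_1, \ldots, V_{n+1}) = g(Z_1, \ldots, Z_{n+1})$. For any permutation $\pi$ of $\{1, \ldots, n+1\}$, the multiset $\{z_{\pi(1)}, \ldots, z_{\pi(n+1)}\} \setminus \{z_{\pi(i)}\}$ equals $\{z_1, \ldots, z_{n+1}\} \setminus \{z_{\pi(i)}\}$, and since $\score$ is insensitive to the ordering of its multiset argument, the $i$-th coordinate of $g(z_{\pi(1)}, \ldots, z_{\pi(n+1)})$ equals the $\pi(i)$-th coordinate of $g(z_1, \ldots, z_{n+1})$. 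Hence $g$ is \emph{equivariant}: permuting its inputs permutes its outputs in the same way.

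Combining equivariance with exchangeability of the data finishes the argument. For any $\pi$,
$$
(V_{\pi(1)}, \ldots, V_{\pi(n+1)}) = g(Z_{\pi(1)}, \ldots, Z_{\pi(n+1)}) \eqd g(Z_1, \ldots, Z_{n+1}) = (V_1, \ldots, V_{n+1}),
$$
where the distributional equality is precisely the exchangeability of $(Z_1, \ldots, Z_{n+1})$; this shows $V_1, \ldots, V_{n+1}$ are exchangeable. Applying Lemma~\ref{lem:quant} then yields $\P\{V_{n+1} \leq \quant(1-\alpha; V_{1:n} \cup \{\infty\})\} \geq 1-\alpha$, and under the no-ties hypothesis the same lemma upper bounds this probability by $1-\alpha + 1/(n+1)$.

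I expect the equivariance step to be the main obstacle, not because it is deep but because it demands careful bookkeeping of the multiset argument. The subtle point is that each $V_i$ is computed from a \emph{different} conditioning set (the other $n$ points), yet the map as a whole respects permutations exactly because the score function ignores the order of its multiset input; getting this indexing right is where the footnote's emphasis on unordered arguments does its work.
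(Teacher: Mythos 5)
Your proof is correct and follows essentially the same route as the paper: reduce the coverage event to the event $\{V_{n+1}\leq \quant(1-\alpha;V_{1:n}\cup\{\infty\})\}$, argue that the symmetric (multiset) construction of the scores makes $V_1,\ldots,V_{n+1}$ exchangeable, and invoke Lemma~\ref{lem:quant}. The only difference is that you spell out the permutation-equivariance of the score map explicitly, which the paper asserts in one line; this is a welcome elaboration, not a deviation.
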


\begin{proof}
To lighten notation, abbreviate \smash{$V_i=V_i^{(X_{n+1},Y_{n+1})}$}, 
  $i=1,\ldots,n+1$. Observe
$$
Y_{n+1} \in \widehat{C}_n(X_{n+1}) \iff 
V_{n+1} \leq \quant\big(1-\alpha; V_{1:n} \cup \{\infty\} \big).
$$ 
By the symmetric construction of the nonconformity scores in \eqref{eq:scores},    
$$
(Z_1, \ldots, Z_{n+1}) \eqd (Z_{\sigma(1)}, \ldots, Z_{\sigma(n+1)})   
\iff (V_1, \ldots, V_{n+1}) \eqd (V_{\sigma(1)}, \ldots, V_{\sigma(n+1)}),  
$$
for any permutation $\sigma$ of the numbers $1,\ldots,n+1$.  Therefore, as
$Z_1,\ldots,Z_{n+1}$ are exchangeable, so are $V_1,\ldots,V_{n+1}$, and applying
Lemma \ref{lem:quant} gives the result. 
\end{proof}

\begin{remark}
Theorem \ref{thm:conf} is stated for exchangeable samples $(X_i,Y_i)$,
$i=1,\ldots,n+1$, which is (significantly) weaker than assuming i.i.d.\ samples.
As we will see in what follows, it is furthermore possible to relax the
exchangeability assumption, under an appropriate modification to the
conformal procedure.  
\end{remark}

\begin{remark}
If we use an appropriate random tie-breaking rule (to determine the rank of 
$V_{n+1}$ among $V_1,\ldots,V_{n+1}$), then the upper bounds in Lemma
\ref{lem:quant} and Theorem \ref{thm:conf} hold in general (without assuming
there are no ties almost surely).
\end{remark}

The result in Theorem \ref{thm:conf}, albeit very simple to prove, is quite
remarkable. It gives us a recipe for distribution-free prediction intervals,
with nearly exact coverage, starting from an arbitrary score function $\score$;
e.g., absolute residuals with respect to a fitted regression function from any
base algorithm $\cA$, as in \eqref{eq:resid}. For more discussion of conformal 
prediction, its properties, and its variants, we refer to
\citet{vovk2005algorithmic,lei2018distribution} and references therein.      

\section{Covariate shift}
\label{sec:cov_shift}

In this paper, we are concerned with settings in which the data $(X_i,Y_i)$,
$i=1,\ldots,n+1$ are no longer exchangeable.  Our primary focus will be a
setting in which we observe data according to 
\begin{equation}
\label{eq:cov_shift}
\begin{gathered}
(X_i, Y_i) \isim P = P_X \times P_{Y|X}, \; i=1,\ldots,n, \\
(X_{n+1}, Y_{n+1}) \sim \widetilde{P} = \widetilde{P}_X \times P_{Y|X}, \;
\text{independently}.  
\end{gathered}
\end{equation}
Notice that the conditional distribution of $Y|X$ is assumed to be the same for
both the training and test data.  Such a setting is often called {\it covariate
  shift} (e.g., see \citealt{shimodaira2000improving,quinonero2009dataset}; see 
also Remark \ref{rem:cov_shift} below for more discussion of this
literature).  The key realization is the following: if we know the ratio of test to
training covariate likelihoods, \smash{$\d \widetilde{P}_X/\d P_X$}, then we can
still perform a modified of version conformal inference, using a quantile
of a suitably weighted empirical distribution of nonconformity scores.  The next
subsection gives the details; following this, we describe a more computationally
efficient conformal procedure, and give an empirical demonstration. 

\subsection{Weighted conformal prediction}

In conformal prediction, we compare the value of a nonconformity
score at a test point to the empirical distribution of nonconformity
scores at the training points. In the covariate shift case, where the
covariate distributions \smash{$P_X,\widetilde{P}_X$} in our training
and test sets differ, we will now weight each nonconformity score
\smash{$V_i^{(x,y)}$} (which measures how well $Z_i=(X_i,Y_i)$ conforms to the
other points) by a probability proportional to the likelihood ratio
\smash{$w(X_i)=\d \widetilde{P}_X(X_i)/\d  P_X(X_i)$}. Therefore, we will no
longer be interested in the empirical distribution   
$$
\frac{1}{n+1} \sum_{i=1}^n \delta_{V_i^{(x,y)}} + \frac{1}{n+1} \delta_\infty, 
$$
as in Theorem \ref{thm:conf}, but rather, a weighted version
$$
\sum_{i=1}^n p^w_i(x) \delta_{V_i^{(x,y)}} + p^w_{n+1}(x) \delta_\infty,  
$$
where the weights are defined by
\begin{equation}
\label{eq:weight_prob_cs}
p^w_i(x) = \frac{w(X_i)}{\sum_{j=1}^n w(X_j) + w(x)}, \; i=1,\ldots,n, 
\quad \text{and} \quad 
p^w_{n+1}(x) = \frac{w(x)}{\sum_{j=1}^n w(X_j) + w(x)}. 
\end{equation}
Due this careful weighting, draws from the discrete distribution in the second
to last display resemble nonconformity scores computed on the test population, 
and thus, they ``look exchangeable'' with the nonconformity score at our test
point. Our main result below formalizes these claims.   

\begin{corollary}
\label{cor:weight_conf_cs}
Assume data from the model \eqref{eq:cov_shift}.  Assume that
\smash{$\widetilde{P}_X$} is absolutely continuous with respect to $P_X$, and 
denote \smash{$w=\d \widetilde{P}_X/\d P_X$}.  For any score function $\score$, 
and any $\alpha \in (0,1)$, define for $x \in \R^d$,
\begin{equation}
\label{eq:weight_conf_cs}
\widehat{C}_n(x) = \bigg\{ y \in \R: V_{n+1}^{(x,y)} \leq \quant\bigg(1-\alpha;
\, \sum_{i=1}^n p^w_i(x) \delta_{V_i^{(x,y)}} + p^w_{n+1}(x) \delta_\infty
\bigg) \bigg\}, 
\end{equation}
where \smash{$V_i^{(x,y)}$}, $i=1,\ldots,n+1$ are as in \eqref{eq:scores}, 
and $p_i^w(x)$, $i=1,\ldots,n+1$ are as in \eqref{eq:weight_prob_cs}. 
Then \smash{$\widehat{C}_n$} satisfies 
$$
\P\Big\{ Y_{n+1} \in \widehat{C}_n(X_{n+1}) \Big\} \geq 1-\alpha. 
$$
\end{corollary}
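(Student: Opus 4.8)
\emph{Reduction.} As in the proof of Theorem~\ref{thm:conf}, abbreviate $Z_i=(X_i,Y_i)$ and $V_i=V_i^{(X_{n+1},Y_{n+1})}$, and set
$$
F \;:=\; \sum_{i=1}^n p_i^w(X_{n+1})\,\delta_{V_i} + p_{n+1}^w(X_{n+1})\,\delta_\infty.
$$
By \eqref{eq:weight_conf_cs}, $Y_{n+1}\in\widehat{C}_n(X_{n+1})$ iff $V_{n+1}\le\quant(1-\alpha;F)$, so it suffices to show $\P\{V_{n+1}\le\quant(1-\alpha;F)\}\ge 1-\alpha$. At the plugged-in value the scores become the symmetric quantities $V_i=\score(Z_i,\,Z_{1:(n+1)}\setminus\{Z_i\})$, so the multiset $\{V_1,\dots,V_{n+1}\}$ is a permutation-invariant function of the multiset $\{Z_1,\dots,Z_{n+1}\}$. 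The one structural difference from the exchangeable case is that, writing $p_X,\widetilde{p}_X$ for densities of $P_X,\widetilde{P}_X$ against a common dominating measure and $p_{Y|X}$ for the conditional density, the joint density of $(Z_1,\dots,Z_{n+1})$ factors as
$$
\prod_{i=1}^n p_X(x_i)\,p_{Y|X}(y_i\mid x_i)\cdot\widetilde{p}_X(x_{n+1})\,p_{Y|X}(y_{n+1}\mid x_{n+1}) \;=\; w(x_{n+1})\cdot g(z_1,\dots,z_{n+1}),
$$
where $g(z_{1:(n+1)})=\prod_{i=1}^{n+1}p_X(x_i)\,p_{Y|X}(y_i\mid x_i)$ is symmetric and $w=\widetilde{p}_X/p_X$. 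Thus the only asymmetry is the single factor $w$ attached to the test slot.

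\emph{Conditioning on the multiset.} The plan is then to condition on the unordered multiset $E=\{Z_1,\dots,Z_{n+1}\}$ and compute which point is the test point. For distinct values $z^{(1)},\dots,z^{(n+1)}$, the tuple $(Z_1,\dots,Z_{n+1})$ equals a given ordering with probability proportional to the joint density there, which by the factorization is proportional to $w$ evaluated at whichever point occupies slot $n+1$, the symmetric factor $g$ cancelling. Summing over the $n!$ orderings placing $z^{(k)}$ in slot $n+1$ gives
$$
\P\{Z_{n+1}=z^{(k)}\mid E\} \;=\; \frac{w(x^{(k)})}{\sum_{j=1}^{n+1}w(x^{(j)})} \;=:\; q_k.
$$
Because the scores are symmetric in $E$, on the event $\{Z_{n+1}=z^{(k)}\}$ we have $V_{n+1}=v^{(k)}:=\score(z^{(k)},E\setminus\{z^{(k)}\})$, and (using that $p_i^w(X_{n+1})=w(X_i)/\sum_j w(X_j)$ when $x=X_{n+1}$) the distribution $F$ becomes $\sum_{j\ne k}q_j\,\delta_{v^{(j)}}+q_k\,\delta_\infty$. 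In other words, $F$ on this event equals the fixed, $E$-measurable weighted distribution $\widetilde{F}:=\sum_{j=1}^{n+1}q_j\,\delta_{v^{(j)}}$ with the test point's own mass moved from $v^{(k)}$ up to $\infty$.

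\emph{Quantile step and conclusion.} Finally I would invoke the quantile fact from the proof of Lemma~\ref{lem:quant}: moving a support point lying strictly above $\quant(1-\alpha;\widetilde{F})$ up to $\infty$ leaves the quantile unchanged, while moving mass toward $\infty$ can only raise it. Applied on each event $\{Z_{n+1}=z^{(k)}\}$, these two observations yield the equivalence $V_{n+1}\le\quant(1-\alpha;F)\iff V_{n+1}\le\quant(1-\alpha;\widetilde{F})$, exactly as in \eqref{eq:inf_equiv}. Since $\widetilde{F}$ depends only on $E$,
$$
\P\{V_{n+1}\le\quant(1-\alpha;\widetilde{F})\mid E\} \;=\; \sum_{k:\,v^{(k)}\le\quant(1-\alpha;\widetilde{F})} q_k \;=\; \widetilde{F}\big((-\infty,\quant(1-\alpha;\widetilde{F})]\big) \;\ge\; 1-\alpha
$$
by the definition of the level $1-\alpha$ quantile; taking expectation over $E$ gives the claim. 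I expect the main obstacle to be the conditioning step: carefully justifying $q_k$ from the factorization while handling non-distinct values/ties, and the measure-theoretic bookkeeping when $P_X,\widetilde{P}_X$ need not admit densities (only $\widetilde{P}_X\ll P_X$ is assumed). For that I would argue via the joint law of $(E,\text{test index})$ by a symmetrization/change-of-measure argument rather than densities, or first reduce to a dominated setting and pass to the limit.
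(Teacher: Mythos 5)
Your proof is correct and follows essentially the same route as the paper: the paper derives Corollary \ref{cor:weight_conf_cs} by specializing Theorem \ref{thm:weight_conf}, whose engine (Lemma \ref{lem:weight_quant}) is exactly your argument---condition on the unordered multiset $\{Z_1,\dots,Z_{n+1}\}$, use the factorization $w(x_{n+1})\cdot g(z_1,\dots,z_{n+1})$ to compute $\P\{Z_{n+1}=z^{(k)}\mid E\}=w(x^{(k)})/\sum_j w(x^{(j)})$, and conclude via the quantile equivalence of \eqref{eq:inf_equiv}. You simply inline the general weighted-exchangeability machinery for the covariate-shift special case rather than invoking it as a separate theorem.
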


Corollary \ref{cor:weight_conf_cs} is a special case of a more general result
that we present later in Theorem \ref{thm:weight_conf}, which extends conformal
inference to a setting in which the data are what we call {\it weighted
  exchangeable}. The proof is given in Section \ref{sec:cor_proof}. 

\begin{remark}
\label{rem:prop_to}
The same result as in Corollary \ref{cor:weight_conf_cs} holds if \smash{$w
  \propto \d \widetilde{P}_X/\d P_X$}, i.e., with an unknown normalization constant,
because this normalization constant cancels out in the calculation of
probabilities in \eqref{eq:weight_prob_cs}.
\end{remark}

\begin{remark}
\label{rem:cov_shift}
Though the basic premise of covariate shift---and certainly the techniques
employed in addressing it---are related to much older ideas in statistics, the 
specific setup \eqref{eq:cov_shift} has recently generated great interest in 
machine learning: e.g., see
\citet{sugiyama2005input,sugiyama2007covariate,quinonero2009dataset,
  agarwal2011linear,wen2014robust,reddi2015doubly,chen2016robust} and
references therein).  The focus is usually on correcting estimators,
model evaluation, or model selection approaches to account for
covariate shift.  Correcting distribution-free prediction intervals,
as we examine in this work, is (as far as we know) a new
contribution. As one might expect, the likelihood ratio
\smash{$\d \widetilde{P}_X/\d P_X$}, a key component of our conformal
construction in Corollary \ref{cor:weight_conf_cs}, also plays a
critical role in much of the literature on covariate shift.
\end{remark}

\subsection{Weighted split conformal}
\label{sec:split_conf}

In general, constructing a conformal prediction band can be computationally
intensive, though this of course depends on the choice of score function.
Consider the use of absolute residuals as in \eqref{eq:resid}.  To compute the
nonconformity scores in \eqref{eq:scores}, we must first run our base algorithm
$\cA$ on the data set $Z_{1:n} \cup \{(x,y)\}$ to produce a fitted regression
function \smash{$\widehat\mu$}, and then calculate
$$
V_i^{(x,y)} = |Y_i - \widehat\mu(X_i)|, \; i=1,\ldots,n, 
\quad \text{and} \quad
V_{n+1}^{(x,y)} = |y - \widehat\mu(x)|. 
$$ 
As the formation of the conformal set in \eqref{eq:conf} (ordinary case) or
\eqref{eq:weight_conf_cs} (covariate shift case) requires us to do this for
each $x \in \R^d$ and $y \in \R$ (which requires refitting \smash{$\widehat\mu$}
each time), this can clearly become computationally burdernsome.   

A fast alternative, known as {\it split conformal prediction}
\citep{papadopoulos2002inductive,lei2015conformal}, resolves this issue by 
taking the score function $\score$ to be defined using absolute residuals with 
respect to a {\it fixed} regression function, typically, one that has been
trained on an preliminary data set. Denote by
\smash{$(X^0_1,Y^0_1),\ldots,(X^0_{n_0},Y^0_{n_0})$} this preliminary data set, 
used for fitting the regression function $\mu_0$, and consider the score
function  
$$
\score\big( (x,y), Z \big) = |y - \mu_0(x)|.
$$
Given data $(X_1,Y_1),\ldots,(X_n,Y_n)$, independent of
\smash{$(X^0_1,Y^0_1),\ldots,(X^0_{n_0},Y^0_{n_0})$}, we calculate 
$$
V_i^{(x,y)} = |Y_i - \mu_0(X_i)|, \; i=1,\ldots,n, 
\quad \text{and} \quad
V_{n+1}^{(x,y)} = |y - \mu_0(x)|.
$$
The conformal prediction interval \eqref{eq:conf}, defined at a point $x \in
\R^d$, reduces to   
\begin{equation}
\label{eq:conf_split}
\widehat{C}_n(x) = \mu_0(x) \pm \quant\Big(1-\alpha ; 
\big\{|Y_i - \mu_0(X_i)|\big\}_{i=1}^n \cup \{\infty\} \Big),
\end{equation}
and by Theorem \ref{thm:conf} it has coverage at least $1-\alpha$, 
conditional on \smash{$(X^0_1,Y^0_1),\ldots,(X^0_{n_0},Y^0_{n_0})$}.  
This coverage result holds because, when we treat $\mu_0$ as fixed (meaning,
condition on \smash{$(X^0_1,Y^0_1),\ldots,(X^0_{n_0},Y^0_{n_0})$}), the scores
\smash{$V_1^{(x,y)},\ldots,V_{n+1}^{(x,y)}$} scores are exchangeable for
$(x,y)=(X_{n+1},Y_{n+1})$, as $(X_1,Y_1),\ldots,(X_{n+1},Y_{n+1})$ are.  

As split conformal prediction can be seen as a special case of conformal
prediction, in which the regression function $\mu_0$ is treated as fixed,
Corollary \ref{cor:weight_conf_cs} also applies to the split scenario, and
guarantees that the band defined for $x \in \R^d$ by
\begin{equation}
\label{eq:weight_conf_cs_split}
\widehat{C}_n(x) = \mu_0(x) \pm  \quant\bigg(1-\alpha; 
\, \sum_{i=1}^n p^w_i(x) \delta_{|Y_i -\mu_0(X_i)|} + p^w_{n+1}(x)
\delta_\infty \bigg),
\end{equation}
where the probabilities are as in \eqref{eq:weight_prob_cs}, has coverage at
least $1-\alpha$, conditional on
\smash{$(X^0_1,Y^0_1),\ldots,(X^0_{n_0},Y^0_{n_0})$}.    

\subsection{Airfoil data example} 

We demonstrate the use of conformal prediction in the covariate shift
setting in an empirical example.  We consider the airfoil data set from the UCI
Machine Learning Repository \citep{dua2017uci}, which has $N=1503$ observations 
of a response $Y$ (scaled sound pressure level of NASA airfoils), and a
covariate $X$ with $d=5$ dimensions (log frequency, angle of attack, chord
length, free-stream velocity, and suction side log displacement thickness).
For efficiency, we use the split conformal prediction methods in
\eqref{eq:conf_split} and \eqref{eq:weight_conf_cs_split} for the unweighted and
weighted case, respectively. R code to reproduce these simulation results can 
be found at {\small \url{http://www.github.com/ryantibs/conformal/}}.

\paragraph{Creating training data, test data, and covariate shift.}

We repeated an experiment for 5000 trials, where for each trial we randomly 
partitioned the data \smash{$\{(X_i,Y_i)\}_{i=1}^N$} into three 
sets $D_{\text{pre}},D_{\text{train}},D_{\text{test}}$, and also constructed a
covariate shift test set $D_{\text{shift}}$, which have the following roles. 

\begin{itemize}
\item $D_{\text{pre}}$, containing 25\% of the data, is used to prefit
  a regression function $\mu_0$ via linear regression, to be used in
  constructing split conformal intervals. 
\item $D_{\text{train}}$, containing 25\% of the data, is our training set, 
  i.e., $(X_i,Y_i)$, $i=1,\ldots,n$, used to compute the residual quantiles in 
  the construction of the split conformal prediction intervals
  \eqref{eq:conf_split} and \eqref{eq:weight_conf_cs_split}.
\item $D_{\text{test}}$, containing 50\% of the data, is our test set (as
  these data points are exchangeable with those in $D_{\text{train}}$, there is
  no covariate shift in this test set).  
\item $D_{\text{shift}}$ is a second test set, constructed to simulate covariate
  shift, by sampling 25\% of the points from $D_{\text{test}}$ with replacement,  
  with probabilities proportional to    
  \begin{equation}
  \label{eq:exp_tilt}
  w(x) = \exp(x^T \beta), \quad \text{where} \quad \beta=(-1,0,0,0,1).  
  \end{equation}
\end{itemize}

As the original data points $D_{\text{train}} \cup D_{\text{test}}$ can be
seen as draws from the same underlying distribution, we can view $w(x)$ as
the likelihood ratio of covariate distributions between the test set
$D_{\text{shift}}$ and training set $D_{\text{train}}$.  Note that the test
covariate distribution \smash{$\widetilde{P}_X$}, which satisfies \smash{$\d
  \widetilde{P}_X \propto \exp(x^T \beta) \d P_X$} as we have defined it here,
is called an {\it exponential tilting} of the training covariate distribution 
$P_X$.  Figure \ref{fig:tilt} visualizes the effect of this exponential tilting,
in the airfoil data set, with our choice $\beta=(-1,0,0,0,1)$.  Only the 1st and 
5th dimensions of the covariate distribution are tilted; the bottom row of
Figure \ref{fig:tilt} plots the marginal densities of the 1st and 5th covariates
(estimated via kernel smoothing) before and after the tilt.  The top row plots
the response versus the 1st and 5th covariates, simply to highlight the fact
that there is heteroskedasticity, and thus we might expect the shift in the 
covariate distribution to have some effect on the validity of the ordinary 
conformal prediction intervals.

\begin{figure}[htb]
\centering
\includegraphics[width=0.75\textwidth]{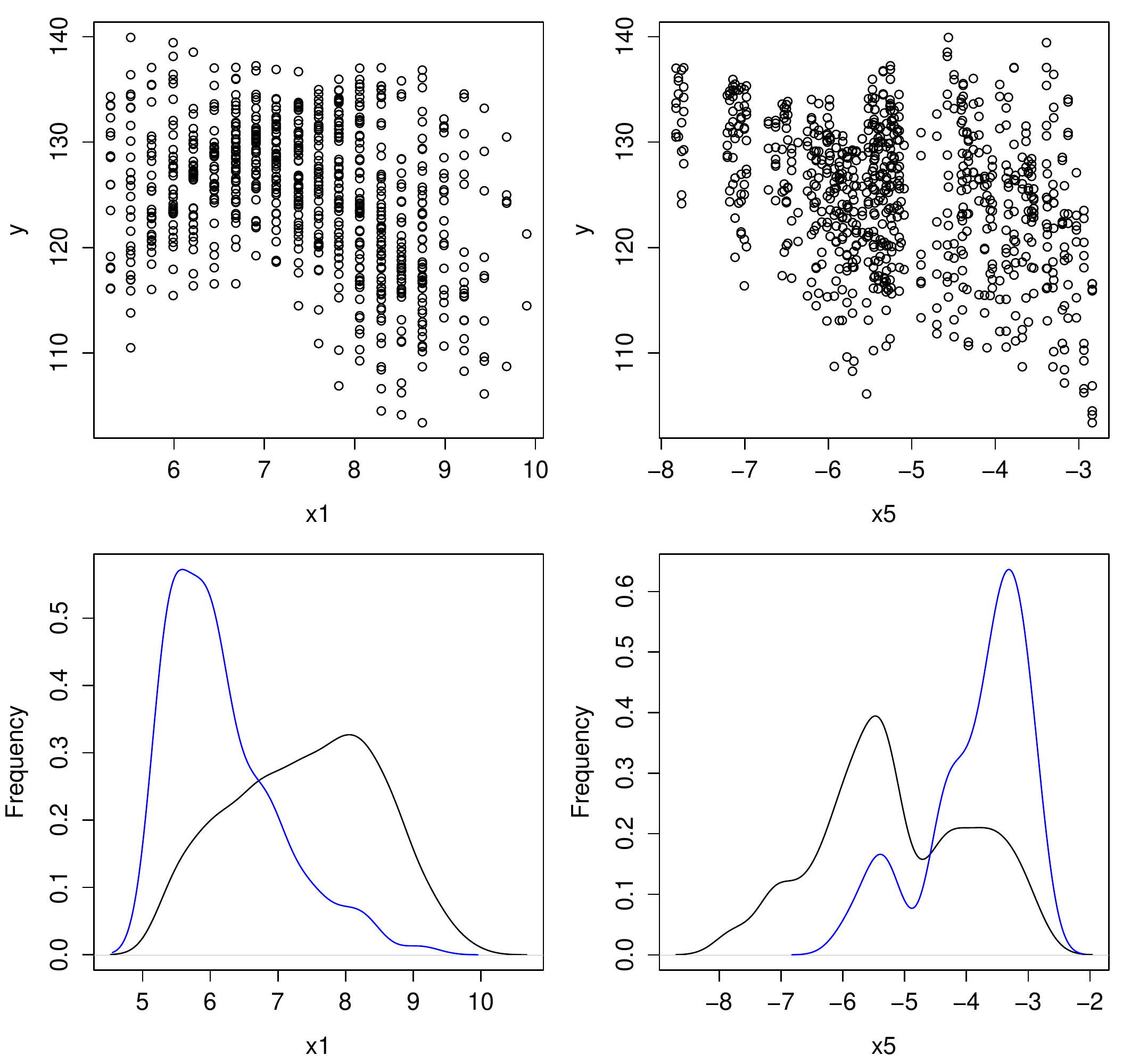}
\caption{\small The top row plots the response in (a randomly chosen half of) 
  the airfoil data set, versus the 1st and 5th covariates.  The bottom row plots
  kernel density estimates for the 1st and 5th covariates, in black. Also
  displayed are kernel density estimates for the 1st and 5th covariates after 
  exponential tilting \eqref{eq:exp_tilt}, in blue.} 
\label{fig:tilt}
\end{figure}

\paragraph{Loss of coverage of ordinary conformal prediction under covariate
  shift.} 

First, we examine the performance of ordinary split conformal prediction 
\eqref{eq:conf_split}. The nominal coverage level was set to be 90\% (meaning  
$\alpha=0.1$), here and throughout. The results are shown in the top row of 
Figure \ref{fig:cov}.  In each of the 5000 trials, we computed the empirical
coverage from the split conformal intervals over points in the test sets, and
the histograms show the distribution of these empirical coverages over the 
trials. We see that for the original test data $D_{\text{test}}$ (no
covariate shift, shown in red), split conformal works as expected, with the
average of the empirical coverages (over the 5000 trials) being 90.2\%;
but for the nonuniformly subsampled test data $D_{\text{shift}}$ (covariate
shift, in blue), split conformal considerably undercovers, with its average
coverage being 82.2\%.     

\begin{figure}[p]
\centering
\includegraphics[width=0.9\textwidth]{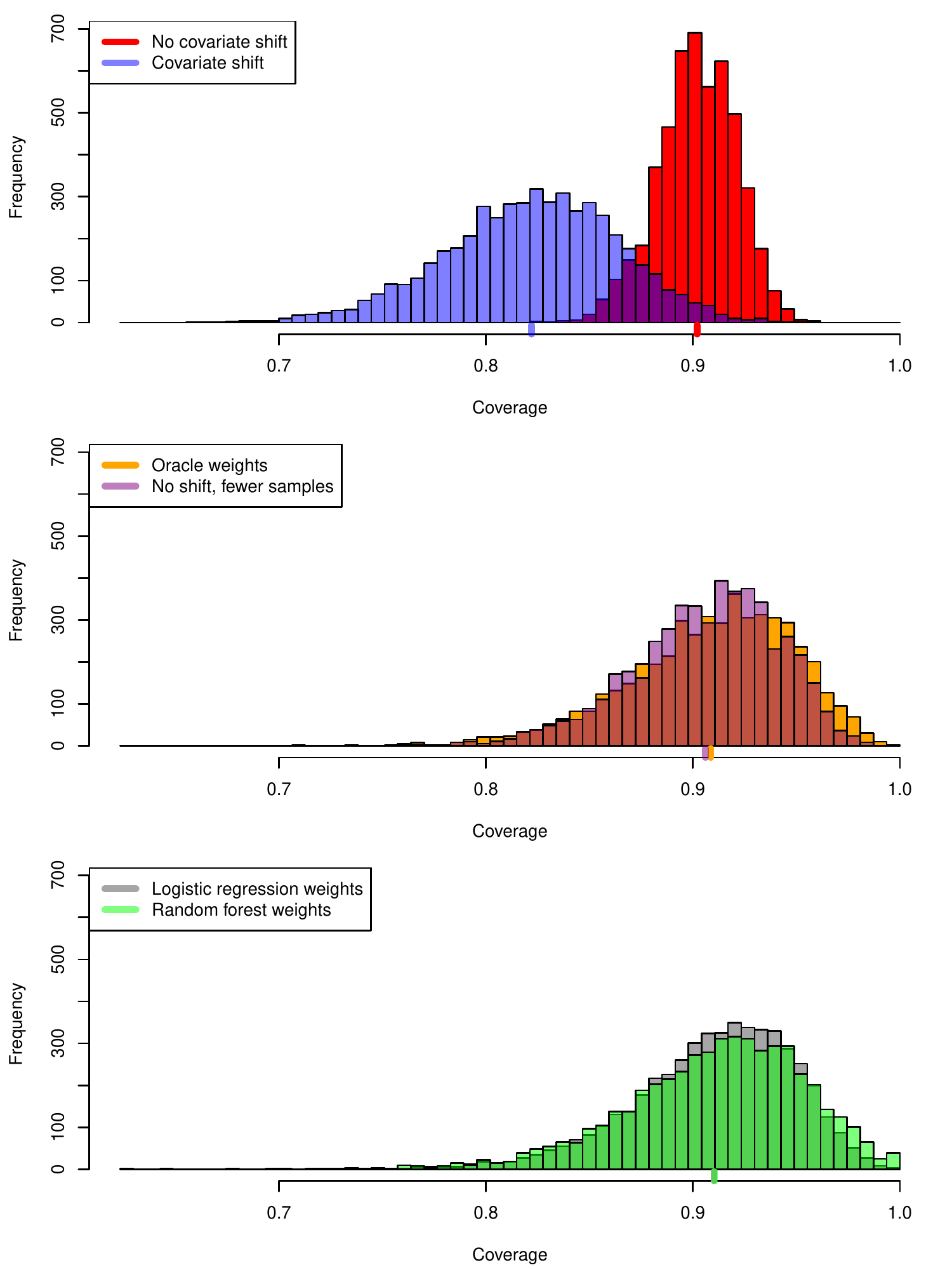}
\caption{\small Empirical coverages of conformal prediction intervals, computed    
  using 5000 different random splits of the airfoil data set.  The averages of
  empirical coverages in each histogram are marked on the x-axis.}  
\label{fig:cov}
\end{figure}

\paragraph{Coverage of weighted conformal prediction with oracle weights.} 

Next, displayed in the middle row of Figure \ref{fig:cov}, we consider weighted  
split conformal prediction \eqref{eq:weight_conf_cs_split}, to cover the
points in $D_{\text{shift}}$ (shown in orange). At the moment, we assume oracle
knowledge of the true weight function $w$ in \eqref{eq:exp_tilt} needed to
calculate the probabilities in \eqref{eq:weight_prob_cs}.  We see that this
brings the coverage back to the desired level, with the average coverage being
90.8\%. However, the histogram is more dispersed than it is when there  
is no covariate shift (compare to the top row, in red).  This is because, by
using a quantile of the weighted empirical distribution of nonconformity scores, 
we are relying on a reduced ``effective sample size''.  Given training points 
$X_1,\ldots,X_n$, and a likelihood ratio $w$ of test to training covariate
distributions, a popular heuristic formula from the covariate shift literature
for the effective sample size of $X_1,\ldots,X_n$ is
\citep{gretton2009covariate,reddi2015doubly}: 
$$
\widehat{n}= \frac{[\sum_{i=1}^n |w(X_i)|]^2}{\sum_{i=1}^n |w(X_i)|^2}
= \frac{\|w(X_{1:n})\|_1^2}{\|w(X_{1:n})\|_2^2},
$$
where we abbreviate $w(X_{1:n})=(w(X_1),\ldots,w(X_n)) \in \R^n$.
To compare weighted conformal prediction against the unweighted method at the
same effective sample size, in each trial, we ran unweighted split conformal
on the original test set $D_{\text{test}}$, but we used only
\smash{$\widehat{n}$} subsampled points from $D_{\text{train}}$ 
to compute the quantile of nonconformity scores, when constructing the 
prediction interval \eqref{eq:conf_split}.  The results (the middle
row of Figure \ref{fig:cov}, in purple) line up very closely with those from
weighted conformal, which demonstrates that apparent overdispersion in the
coverage histogram from the latter is fully explained by the reduced effective
sample size.    

\begin{figure}[p]
\centering
\includegraphics[width=0.9\textwidth]{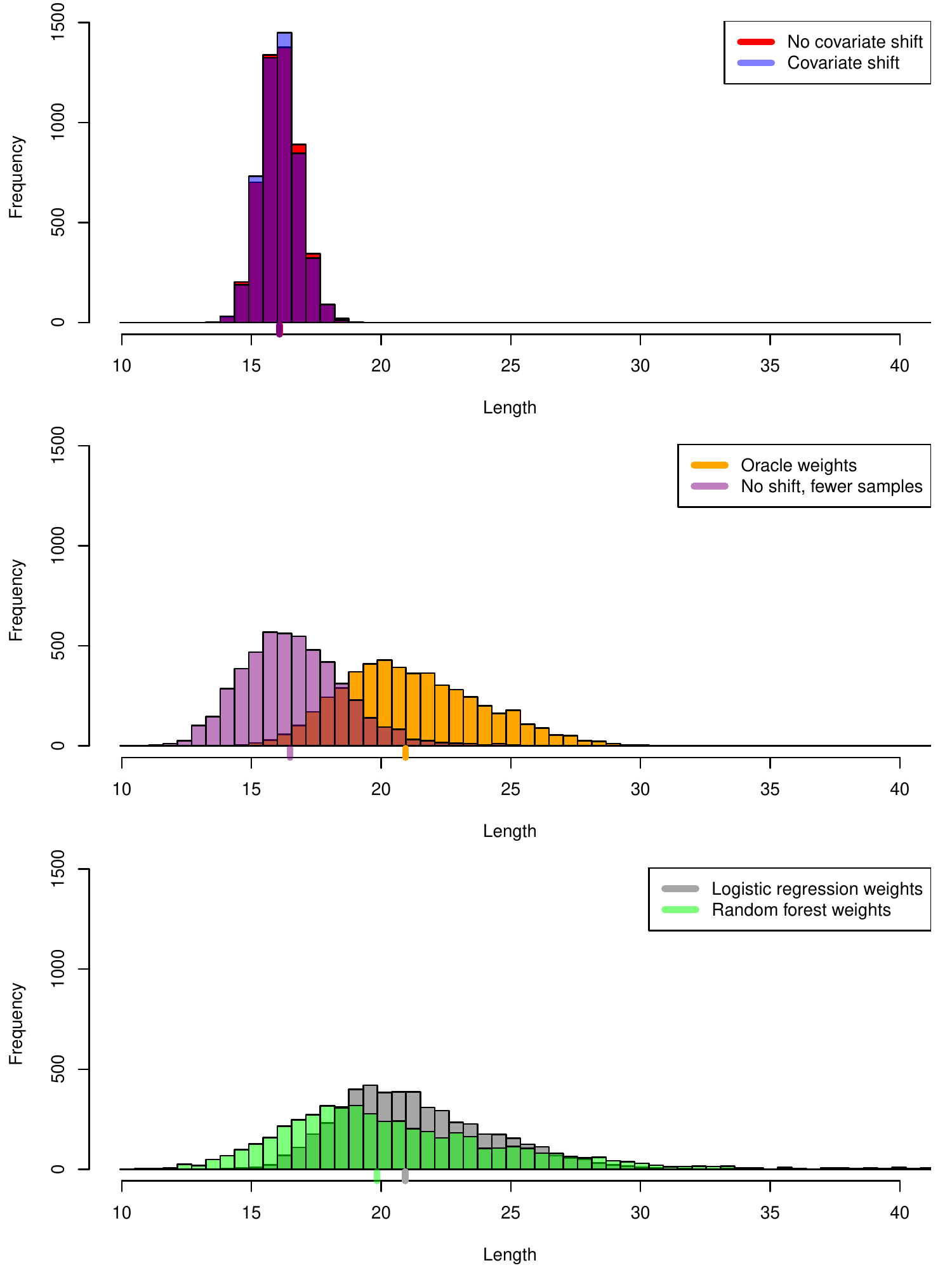}
\caption{\small Median lengths of conformal prediction intervals, computed
  using 5000 different random splits of the airfoil data set. The averages of
  median lengths in each histogram are marked on the x-axis.} 
\label{fig:len}
\end{figure}

\begin{figure}[tb]
\centering
\includegraphics[width=0.725\textwidth]{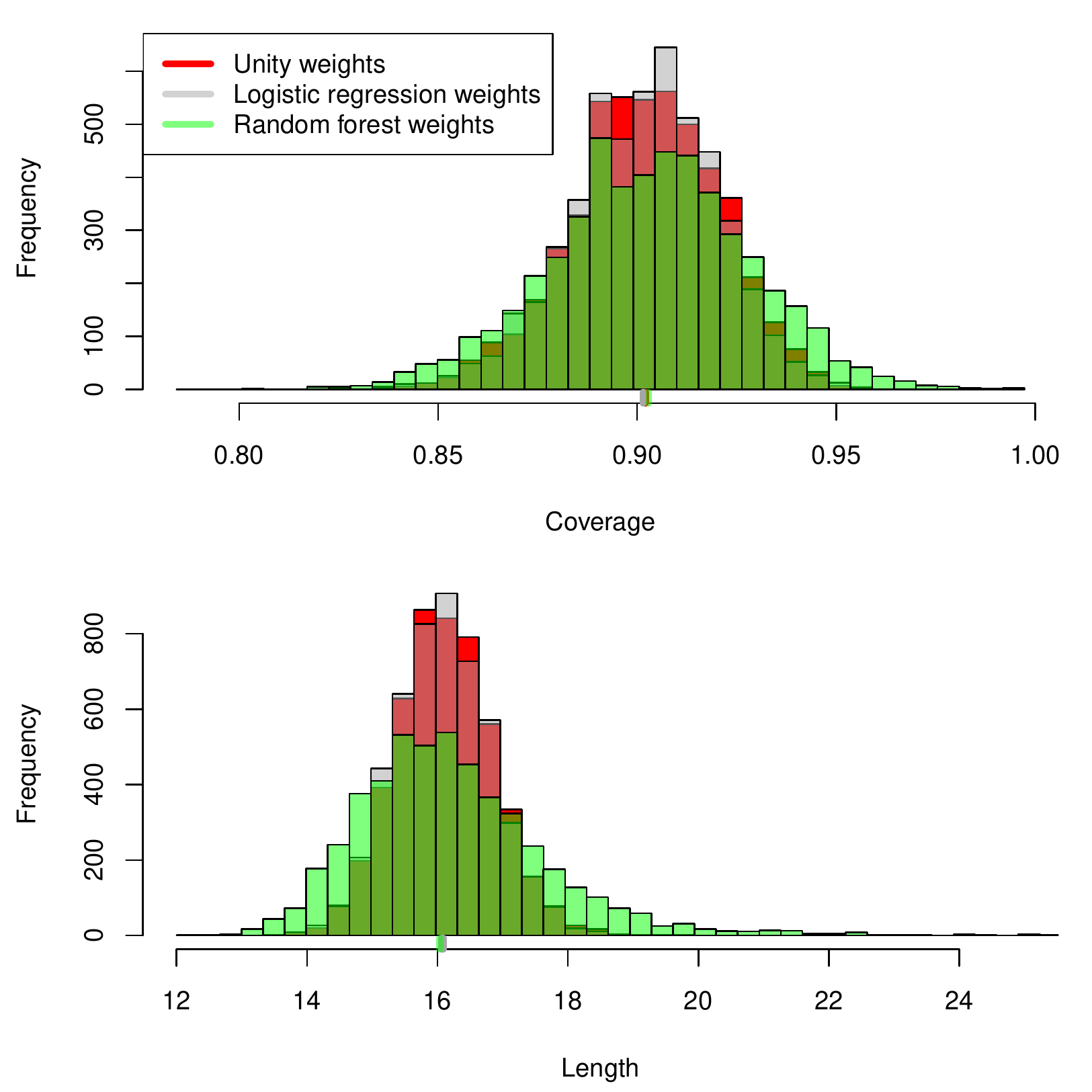}
\caption{\small Empirical coverages and median lengths from conformal  
  prediction, on the airfoil data set, with no covariate shift.}
\label{fig:ada}
\end{figure}

\paragraph{Coverage of weighted conformal with estimated weights.} 

Denote by $X_1,\ldots,X_n$ the covariate points in $D_{\text{train}}$ and 
by $X_{n+1},\ldots,X_{n+m}$ the covariate points in $D_{\text{shift}}$. 
Here we describe how to estimate \smash{$w=\d \widetilde{P}_X/\d P_X$}, the 
likelihood ratio of interest, by applying logistic regression or random 
forests (more generally, any classifier that outputs estimated probabilities
of class membership) to the feature-class pairs $(X_i,C_i)$, $i=1,\ldots,n+m$, 
where $C_i=0$ for $i=1,\ldots,n$ and $C_i=1$ for $i=n+1,\ldots,n+m$.  Noting
that 
$$
\frac{\P(C=1 | X=x)}{\P(C=0 | X=x)}
= \frac{\P(C=1)}{\P(C=0)} \frac{\d \widetilde{P}_X}{\d P_X}(x),
$$
we can take the conditional odds ratio $w(x)=\P(C=1|X=x)/\P(C=0|X=x)$ as an
equivalent representation for the oracle weight function (since we only need to
know the likelihood ratio up to a proportionality constant, recall Remark
\ref{rem:prop_to}). Therefore, if \smash{$\widehat{p}(x)$} is an estimate of
$\P(C=1|X=x)$ obtained by fitting a classifier to the data $(X_i,C_i)$,
$i=1,\ldots,n+m$, then we can use  
\begin{equation}
\label{eq:weight_est}
\widehat{w}(x) = \frac{\widehat{p}(x)}{1-\widehat{p}(x)}
\end{equation}
as our estimated weight function for the calculation of probabilities 
\eqref{eq:weight_prob_cs}, needed for conformal \eqref{eq:weight_conf_cs} or
split conformal \eqref{eq:weight_conf_cs_split}.  There is in fact a sizeable 
literature on density ratio estimation, and the method just describe falls
into a class called {\it probabilistic classification} approaches; two other
classes are based on moment matching, and minimization of $\phi$-divergences 
(e.g., Kullblack-Leibler divergence).  For a comprehensive review of these
approaches, and supporting theory, see \citet{sugiyama2012density}.   

The bottom row of Figure \ref{fig:cov} shows the results from using weighted
split conformal prediction to cover the points in $D_{\text{shift}}$, where the
weight function \smash{$\widehat{w}$} has been estimated as in
\eqref{eq:weight_est}, using logistic regression (in gray) and random
forests\footnote{In the random forests approach, we clipped the estimated 
  test class probability \smash{$\widehat{p}(x)$} to lie in between 0.01 and
  0.99, to prevent the estimated weight (likelihood ratio)
  \smash{$\widehat{w}(x)$} from being infinite.  Without clipping, the estimated 
  probability of being in the test class was sometimes exactly 1 (this occurred
  in about 2\% of the cases encountered over all 5000 repetitions), resulting in
  an infinite weight, and causing numerical issues.} 
(in green) to fit the class probability function \smash{$\widehat{p}$}. Note
that logistic regression is well-specified in this example, because it assumes 
the log odds is a linear function of $x$, which is exactly as in
\eqref{eq:exp_tilt}.  Random forests, of course, allows more flexibility in the
fitted model. Both classification approaches deliver weights that translate into  
good average coverage, being 91.0\% for each approach.  Furthermore, their 
histograms are only a little more dispersed than that for the oracle weights
(middle row, in orange).   

\paragraph{Lengths of weighted conformal intervals.} 

Figure \ref{fig:len} conveys the same setup as Figure \ref{fig:cov}, but displays
histograms of the median lengths of prediction intervals rather than empirical  
coverages (meaning, in each of the 5000 trials, we ran unweighted or weighted
split conformal prediction to cover test points, and report the median length of
the prediction intervals over the test sets).  We see no differences in the
lengths of ordinary split conformal intervals (top row) when there is or is not
covariate shift, as expected since these two settings differ only in the
distributions of their test sets, but use the same procedure and have the same
distribution of the training data. We see that the oracle-weighted split
conformal intervals are longer than the ordinary split conformal intervals
that use an equivalent effective sample size (middle row).  This is also
as expected, since in the former situation, the regression function
$\mu_0$ was fit on training data $D_{\text{train}}$ of a different distribution
than $D_{\text{shift}}$, and $\mu_0$ itself should ideally be adjusted to
account for covariate shift (plenty of methods for this are available from the
covariate shift literature, but we left it unadjusted for simplicity). Lastly,
we see that the random forests-weighted split conformal intervals are more
variable, and in some cases, much longer, than the logistic regression-weighted
split conformal intervals (bottom row, difficult to confirm visually because the
bars in the histogram lie so close to the x-axis).  

\paragraph{Weighted conformal when there is actually no covariate shift.}

Lastly, Figure \ref{fig:ada} compares the empirical coverages and median
lengths of split conformal intervals to cover points in $D_{\text{test}}$ (no
covariate shift), using the ordinary unweighted approach (in red), the logistic
regression-weighted approach (in gray), and the random forests-weighted approach
(in green).  The unweighted and logistic regression approaches are very similar.
The random forests approach yields slightly more dispersed coverages and
lengths.  This is because random forests are very flexible, and in the present
case of no covariate shift, the estimated weights from random forests in each
repetition are in general further from constant (compared to those from logistic 
regression).  Still, random forests must not be overfitting dramatically here,
since the coverages and lengths are still reasonable.

\section{Weighted exchangeability}
\label{sec:weight_exch}

In this section, we develop a general result on conformal prediction for
settings in which the data satisfy what we call {\it weighted exchangeability}.
In the first subsection, we take a look back at the key quantile result in Lemma  
\ref{lem:quant}, and present an alternative proof from a somewhat different
perspective.  Then we precisely define weighted exchangeability, extend 
Lemma \ref{lem:quant} to this new (and broader) setting, and extend 
conformal prediction as well.  The last subsection provides a proof of Corollary
\ref{cor:weight_conf_cs}, as a special case of our general conformal result.  

\subsection{Alternate proof of Lemma \ref{lem:quant}}
\label{sec:new_proof}

The general strategy we pursue here is to condition on the unlabeled multiset of 
values obtained by our random variables $V_1,\ldots,V_{n+1}$, and then inspect
the probabilities that the last random variable $V_{n+1}$ attains each one of
these values. For simplicity, we assume that there are almost surely no ties
among the scores $V_1,\ldots,V_{n+1}$, so that we can work with sets rather than
multisets (our arguments apply to the general case as well, but the notation is
more cumbersome).     

Denote by $E_v$ the event that $\{V_1,\ldots,V_{n+1}\}=\{v_1,\ldots,v_{n+1}\}$,
and consider 
\begin{equation}
\label{eq:last_prob}
\P\{V_{n+1}=v_i \,|\, E_v\}, \; i=1,\ldots,n+1.
\end{equation}
Denote by $f$ the probability density function\footnote{More generally, $f$ may
  be the Radon-Nikodym derivative with respect to an arbitrary base measure;
  this measure may be discrete, continuous, or neither; henceforth, we will use
  the term ``density'' just for simplicity.} of the 
joint distribution $V_1,\ldots,V_{n+1}$. Exchangeability implies that 
$$
f(v_1,\ldots,v_{n+1}) = f(v_{\sigma(1)},\ldots, v_{\sigma(n+1)})
$$ 
for any permutation $\sigma$ of the numbers $1,\ldots,n+1$. Thus, for each $i$,
we have   
\begin{align}
\nonumber
\P\{V_{n+1}=v_i \,|\, E_v\}
&= \frac{\sum_{\sigma : \sigma(n+1)=i} f(v_{\sigma(1)},\ldots, v_{\sigma(n+1)})}  
{\sum_\sigma f(v_{\sigma(1)},\ldots, v_{\sigma(n+1)})} \\
\nonumber
&= \frac{\sum_{\sigma: \sigma(n+1)=i} f(v_1,\ldots,v_{n+1})}  
{\sum_\sigma f(v_1,\ldots,v_{n+1})} \\
\label{eq:exch_calc}
&= \frac{n!}{(n+1)!} = \frac{1}{n+1}. 
\end{align}
This shows that the distribution of $V_{n+1}|E_v$ is uniform on the set 
$\{v_1,\ldots,v_{n+1}\}$, i.e.,
$$
V_{n+1}|E_v \sim \frac{1}{n+1}\sum_{i=1}^{n+1}\delta_{v_i},
$$
and it follows immediately that
$$
\P\bigg\{ V_{n+1} \leq \quant\bigg(\beta; \,
\frac{1}{n+1}\sum_{i=1}^{n+1}\delta_{v_i} \bigg) \, \bigg| \,
E_v\bigg\} \geq \beta.
$$
On the event $E_v$, we have $\{V_1,\ldots,V_{n+1}\} = \{v_1,\ldots,v_{n+1}\}$, 
so 
$$
\P\bigg\{ V_{n+1} \leq \quant\bigg(\beta; \, 
\frac{1}{n+1}\sum_{i=1}^{n+1}\delta_{V_i}\bigg) \, \bigg| \, 
E_v\bigg\} \geq \beta.
$$
Because this true for any $v$, we can marginalize to obtain   
$$
\P\bigg\{ V_{n+1} \leq \quant\bigg(\beta; \,
\frac{1}{n+1}\sum_{i=1}^{n+1}\delta_{V_i}\bigg) \bigg\} \geq \beta,
$$
which, as argued in \eqref{eq:inf_equiv}, is equivalent to the desired lower
bound in the lemma. (The upper bound follows similarly.)

\subsection{Weighted exchangeability}

The alternate proof in the last subsection is certainly more complicated than
our first proof of Lemma \ref{lem:quant}.  What good did it do? Recall, we
proceeded as if we had observed the unordered set of nonconformity scores
$\{V_1,\ldots,V_{n+1}\}=\{v_1,\ldots,v_{n+1}\}$ but had forgotten the labels
(as if we forgot which random variable $V_i$ was associated with which value
$v_j$).  We then reduced the construction of a prediction interval for $V_{n+1}$
to the computation of probabilities in \eqref{eq:last_prob}, that the last
random variable $V_{n+1}$ attains each one of the observed values
$v_1,\ldots,v_{n+1}$. The critical point was that this strategy isolated the
role of exchangeability: it was used precisely to compute these probabilities,
in \eqref{eq:exch_calc}.  As we will show, the same probabilities in
\eqref{eq:last_prob} can be calculated in a broader setting of interest,
beyond the exchangeable one. 

We first define a generalized notion of exchangeability.

\begin{definition}
\label{def:weight_exch}
Random variables $V_1,\ldots,V_n$ are said to be {\it weighted exchangeable}, 
with weight functions $w_1,\ldots,w_n$, if the density\footnote{As before, 
  $f$ may be the Radon-Nikodym derivative with respect to an arbitrary base 
  measure.} $f$ of their joint distribution can be factorized as 
$$
f(v_1,\ldots,v_n) = \prod_{i=1}^n w_i(v_i) \cdot g(v_1,\ldots,v_n), 
$$
where $g$ is any function that does not depend on the ordering of its inputs,
i.e., \smash{$g(v_{\sigma(1)}, \ldots, v_{\sigma(n)}) = g(v_1,\ldots,v_n)$} for
any permutation $\sigma$ of $1,\ldots,n$. 
\end{definition}

Clearly, weighted exchangeability with weight functions $w_i \equiv 1$ for
$i=1,\ldots,n$ reduces to ordinary exchangeability.  Furthermore, independent
draws (where all marginal distributions are absolutely continuous with respect
to, say, the first one) are always weighted exchangeable, with weight functions
given by the appropriate Radon-Nikodym derivatives, i.e., likelihood ratios.
This is stated next; the proof follows directly from Definition
\ref{def:weight_exch} and is omitted.

\begin{lemma}
\label{lem:weight_exch}
Let $Z_i \sim P_i$, $i=1,\ldots,n$ be independent draws, where each $P_i$ is 
absolutely continuous with respect to $P_1$, for $i \geq 2$.  Then
$Z_1,\ldots,Z_n$ are weighted exchangeable, with weight functions 
$w_1 \equiv 1$, and $w_i=\d P_i/\d P_1$, $i \geq 2$.
\end{lemma}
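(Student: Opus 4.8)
The plan is to verify Definition \ref{def:weight_exch} directly, by computing the joint density of $Z_1,\ldots,Z_n$ against a judiciously chosen base measure. Since the definition permits $f$ to be the Radon--Nikodym derivative with respect to \emph{any} base measure, I would take that base measure to be the $n$-fold product $P_1 \times \cdots \times P_1$. The whole point of this choice is that it makes the claimed weight functions $w_i = \d P_i/\d P_1$ appear automatically, while the permutation-invariant factor $g$ degenerates to a constant.

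Concretely, first I would note that by independence the joint law of $(Z_1,\ldots,Z_n)$ is the product measure $P_1 \times P_2 \times \cdots \times P_n$. Next, under the standing hypothesis that $P_i \ll P_1$ for each $i$ (trivially so for $i=1$), every coordinate admits a density $w_i = \d P_i/\d P_1$, with $w_1 \equiv 1$. The one substantive ingredient is the product rule for Radon--Nikodym derivatives: the derivative of one product measure with respect to another factorizes across coordinates, giving
$$
\frac{\d(P_1 \times \cdots \times P_n)}{\d(P_1 \times \cdots \times P_1)}(v_1,\ldots,v_n)
= \prod_{i=1}^n \frac{\d P_i}{\d P_1}(v_i)
= \prod_{i=1}^n w_i(v_i).
$$

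This identity is exactly the factorization demanded by Definition \ref{def:weight_exch}, with the symmetric factor taken to be $g \equiv 1$; a constant function is trivially invariant under permutations of its arguments, so the requirement on $g$ holds. I do not expect any genuine obstacle here: the only things warranting care are the selection of $P_1^{\otimes n}$ (rather than, say, Lebesgue measure) as the base measure, which is what makes the $w_i$ come out cleanly, and the invocation of the product rule above, which is standard and follows from Tonelli's theorem applied to the defining integral identity of the product-measure derivative. Hence the result, which is why the authors note its proof can be omitted.
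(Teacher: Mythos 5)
Your argument is correct and is precisely the direct verification the paper has in mind when it states that the proof ``follows directly from Definition \ref{def:weight_exch} and is omitted'': take the base measure to be $P_1^{\otimes n}$, factor the Radon--Nikodym derivative of the product law across coordinates, and set $g\equiv 1$. Nothing further is needed.
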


Lemma \ref{lem:weight_exch} highlights an important special case (which
we note, includes the covariate shift model in \eqref{eq:cov_shift}).  But it is 
worth being clear that our definition of weighted exchangeability encompasses
more than independent sampling, and allows for a nontrivial dependency structure
between the variables, just as exchangeability is broader than the i.i.d. case.     

\subsection{Weighted quantile lemma}

Now we give a weighted generalization of Lemma \ref{lem:quant}.  

\begin{lemma}
\label{lem:weight_quant}
Let $Z_i$, $i=1,\ldots,n+1$ be weighted exchangeable random variables, with 
weight functions $w_1,\ldots,w_{n+1}$.  Let $V_i=\score( Z_i, Z_{-i} )$, where
$Z_{-i}=Z_{1:(n+1)}\setminus\{Z_i\}$, for $i=1,\ldots,n+1$, and $\score$ is an 
arbitrary score function. Define
\begin{equation}
\label{eq:weight_prob}
p^w_i(z_1,\ldots,z_{n+1}) = 
\frac{\sum_{\sigma : \sigma(n+1)=i} \prod_{j=1}^{n+1} w_j(z_{\sigma(j)})}
{\sum_\sigma \prod_{j=1}^{n+1} w_j(z_{\sigma(j)})}, \; i=1,\ldots,n+1,  
\end{equation} 
where the summations are taken over permutations $\sigma$ of the numbers
$1,\ldots,n+1$.  Then for any $\beta\in(0,1)$,
$$
\P\bigg\{ V_{n+1} \leq \quant\bigg(\beta; \, \sum_{i=1}^n 
p^w_i(Z_1,\ldots,Z_{n+1}) \delta_{V_i} + p^w_{n+1}(Z_1,\ldots,Z_{n+1}) 
\delta_\infty\bigg) \bigg\} \geq \beta.
$$
\end{lemma}

\begin{proof}
We follow the same general strategy in the alternate proof of Lemma
\ref{lem:quant} in Section \ref{sec:new_proof}.  As before, we assume for
simplicity that $V_1,\ldots,V_{n+1}$ are distinct almost surely (but the result
holds in the general case as well).

Let $E_z$ denote the event that $\{Z_1,\ldots,Z_{n+1}\}=\{z_1,\ldots,z_{n+1}\}$,
and let $v_i=\score(z_i, z_{-i} )$, where $z_{-i}=z_{1:(n+1)}\setminus\{z_i\}$,
for $i=1,\ldots,n+1$.  Let $f$ be the density function of the joint sample
$Z_1,\ldots,Z_{n+1}$. For each $i$,  
$$
\P\{V_{n+1} = v_i \,|\, E_z\} = \P\{Z_{n+1} = z_i \,|\, E_z\}
= \frac{\sum_{\sigma : \sigma(n+1)=i} f(z_{\sigma(1)},\ldots, z_{\sigma(n+1)})} 
{\sum_\sigma f(z_{\sigma(1)},\ldots, z_{\sigma(n+1)})},
$$
and as $Z_1,\ldots,Z_{n+1}$ are weighted exchangeable,
\begin{align*}
\frac{\sum_{\sigma : \sigma(n+1)=i} f(z_{\sigma(1)},\ldots, z_{\sigma(n+1)})}  
{\sum_\sigma f(z_{\sigma(1)},\ldots, z_{\sigma(n+1)})} &= 
\frac{\sum_{\sigma : \sigma(n+1)=i} \prod_{j=1}^{n+1} w_j(z_{\sigma(j)}) \cdot 
  g(z_{\sigma(1)},\ldots,z_{\sigma(n+1)})}
{\sum_\sigma \prod_{j=1}^{n+1} w_j(z_{\sigma(j)}) \cdot
  g(z_{\sigma(1)},\ldots,z_{\sigma(n+1)})}  \\
&= \frac{\sum_{\sigma : \sigma(n+1)=i} \prod_{j=1}^{n+1} w_j(z_{\sigma(j)})
  \cdot g(z_1,\ldots,z_{n+1})} 
{\sum_\sigma\prod_{j=1}^{n+1} w_j(z_{\sigma(j)}) \cdot g(z_1,\ldots,z_{n+1})} \\ 
&= p^w_i(z_1,\ldots,z_{n+1}).
\end{align*}
In other words,  
$$
V_{n+1}|E_z \sim \sum_{i=1}^{n+1} p^w_i(z_1,\ldots,z_{n+1})\delta_{v_i},
$$
which implies that
$$
\P\bigg\{ V_{n+1} \leq \quant\bigg(\beta; \, \sum_{i=1}^{n+1}
p^w_i(z_1,\ldots,z_{n+1})\delta_{v_i}\bigg) \, \bigg| \, E_z\bigg\}
\geq \beta. 
$$
This is equivalent to  
$$
\P\bigg\{ V_{n+1} \leq \quant\bigg(\beta; \, \sum_{i=1}^{n+1} 
p^w_i(Z_1,\ldots,Z_{n+1}) \delta_{V_i}\bigg) \, \bigg| \, E_z\bigg\}
\geq \beta, 
$$
and after marginalizing, 
$$
\P\bigg\{ V_{n+1} \leq \quant\bigg(\beta; \, \sum_{i=1}^{n+1}
p^w_i(Z_1,\ldots,Z_{n+1}) \delta_{V_i}\bigg) \bigg\} \geq \beta. 
$$
Finally, as in \eqref{eq:inf_equiv}, this is equivalent to the claim in the
lemma. 
\end{proof}

\begin{remark}
When $V_1,\ldots,V_{n+1}$ are exchangeable, we have $w_i \equiv 1$ for
$i=1,\ldots,n$, and thus $p^w_i\equiv 1$ for $i=1,\ldots,n$ as well. In this
special case, then, the lower bound in Lemma \ref{lem:weight_quant} reduces to
the ordinary unweighted lower bound in Lemma \ref{lem:quant}.  Meanwhile, a
obtaining meaningful upper bound on the probability in question in Lemma
\ref{lem:weight_quant}, as was done in Lemma \ref{lem:quant} (under the
assumption of no ties, almost surely), does not seem possible without further
conditions on the weight functions in consideration. This is because the largest
``jump'' in the discontinuous cumulative distribution function of $V_{n+1} |
E_z$ is of size \smash{$\max_{i=1,\ldots,n+1} p^w_i(z_1,\ldots,z_{n+1})$}, which
can potentially be very large; by comparison, in the unweighted case, this jump
is always of size $1/(n+1)$.
\end{remark}

\subsection{Weighted conformal prediction}

A weighted version of conformal prediction follows immediately from Lemma
\ref{lem:weight_quant}. 

\begin{theorem}
\label{thm:weight_conf}
Assume that $Z_i=(X_i,Y_i) \in \R^d \times \R$, $i=1,\ldots,n+1$ are weighted  
exchangeable with weight functions $w_1,\ldots,w_{n+1}$. For any score
function $\score$, and any $\alpha \in (0,1)$, define the weighted conformal
band (based on the first $n$ samples) at a point $x \in \R^d$ by
\begin{equation}
\label{eq:weight_conf}
\widehat{C}_n(x) = \bigg\{ y \in \R: V_{n+1}^{(x,y)} \leq \quant\bigg(1-\alpha;
\, \sum_{i=1}^n p^w_i\big(Z_1,\ldots,Z_n,(x,y)\big) \delta_{V_i^{(x,y)}} \,+\, 
p^w_{n+1}\big(Z_1,\ldots,Z_n,(x,y)\big) \delta_\infty\bigg)\bigg\},  
\end{equation}
where \smash{$V_i^{(x,y)}$}, $i=1,\ldots,n+1$ are as in \eqref{eq:scores}, 
and $p^w_i$, $i=1,\ldots,n+1$ are as in \eqref{eq:weight_prob}. 
Then \smash{$\widehat{C}_n$} satisfies    
$$
\P\Big\{ Y_{n+1} \in \widehat{C}_n(X_{n+1}) \Big\} \geq 1-\alpha. 
$$
\end{theorem}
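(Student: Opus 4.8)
The plan is to mirror the reduction used in the proof of Theorem~\ref{thm:conf}, simply replacing the appeal to the unweighted quantile Lemma~\ref{lem:quant} by an appeal to the weighted quantile Lemma~\ref{lem:weight_quant}. First I would abbreviate $V_i = V_i^{(X_{n+1},Y_{n+1})}$ for $i=1,\ldots,n+1$ and rewrite the coverage event. By definition \eqref{eq:weight_conf}, membership of $y$ in $\widehat{C}_n(x)$ is exactly the event that $V_{n+1}^{(x,y)}$ lies at or below the level $1-\alpha$ weighted quantile of the scores, so specializing to $(x,y)=(X_{n+1},Y_{n+1})$ yields
$$
Y_{n+1} \in \widehat{C}_n(X_{n+1}) \iff V_{n+1} \leq \quant\bigg(1-\alpha; \, \sum_{i=1}^n p^w_i(Z_1,\ldots,Z_{n+1}) \delta_{V_i} + p^w_{n+1}(Z_1,\ldots,Z_{n+1}) \delta_\infty\bigg),
$$
where I have used that $(X_{n+1},Y_{n+1})=Z_{n+1}$, so that the probability weights $p^w_i(Z_1,\ldots,Z_n,(X_{n+1},Y_{n+1}))$ appearing in \eqref{eq:weight_conf} coincide with the weights $p^w_i(Z_1,\ldots,Z_{n+1})$ of \eqref{eq:weight_prob}.

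The crux is then to verify that, at the test point, the conformal scores \eqref{eq:scores} coincide with the symmetric leave-one-out scores of the lemma. Unwinding \eqref{eq:scores} with $(x,y)=Z_{n+1}$, for $i \leq n$ we have $V_i = \score(Z_i, (Z_{1:n}\setminus\{Z_i\})\cup\{Z_{n+1}\}) = \score(Z_i, Z_{1:(n+1)}\setminus\{Z_i\})$, and likewise $V_{n+1} = \score(Z_{n+1}, Z_{1:n}) = \score(Z_{n+1}, Z_{1:(n+1)}\setminus\{Z_{n+1}\})$. Hence $V_i = \score(Z_i, Z_{-i})$ with $Z_{-i} = Z_{1:(n+1)}\setminus\{Z_i\}$ for every $i$, which is precisely the definition of the scores in Lemma~\ref{lem:weight_quant}.

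Having matched both the scores and the weights to those of the lemma, I would invoke Lemma~\ref{lem:weight_quant} with $\beta = 1-\alpha$ on the weighted exchangeable sequence $Z_1,\ldots,Z_{n+1}$ to conclude $\P\{Y_{n+1}\in\widehat{C}_n(X_{n+1})\} \geq 1-\alpha$ directly. The only real work---and the step most prone to bookkeeping error---is the score and weight alignment of the second paragraph: one must check that the ``leave-one-out plus test point'' construction of \eqref{eq:scores} genuinely reproduces the full symmetric leave-one-out scores on $n+1$ points once the test point is instantiated as $Z_{n+1}$. Everything else is a direct substitution into Lemma~\ref{lem:weight_quant}, so I do not anticipate any genuine difficulty beyond this verification.
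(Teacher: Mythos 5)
Your proposal is correct and follows essentially the same route as the paper's own proof: rewrite the coverage event as the weighted-quantile event for $V_{n+1}=V_{n+1}^{(X_{n+1},Y_{n+1})}$ and apply Lemma~\ref{lem:weight_quant} with $\beta=1-\alpha$. The score/weight alignment you spell out in your second paragraph is exactly what the paper compresses into the phrase ``by construction,'' so there is no gap.
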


\begin{proof}
Abbreviate \smash{$V_i=V_i^{(X_{n+1},Y_{n+1})}$}, $i=1,\ldots,n+1$. By
construction,  
$$
Y_{n+1} \in \widehat{C}_n(X_{n+1}) \iff 
V_{n+1} \leq  \quant\bigg(1-\alpha; \; \sum_{i=1}^n p^w_i(Z_1,\ldots,Z_{n+1}) 
\delta_{V_i} + p^w_{n+1}(Z_1,\ldots,Z_{n+1}) \delta_\infty\bigg),
$$ 
and applying Lemma \ref{lem:weight_quant} gives the result.  
\end{proof}

\begin{remark}
As explained in Section \ref{sec:split_conf}, the split conformal method
is simply a special case of the conformal prediction framework, where we take
the score function to be $\cS((x,y),Z) = |y-\mu_0(x)|$, with $\mu_0$ precomputed 
on a preliminary data set \smash{$(X^0_1,Y^0_1),\ldots,(X^0_{n_0},Y^0_{n_0})$}. 
Hence, the result in Theorem \ref{thm:weight_conf} carries over to the split
conformal method as well, in which case the weighted conformal prediction
interval in \eqref{eq:weight_conf} simplifies to 
$$
\widehat{C}_n(x) = \mu_0(x)\pm  \quant\bigg(1-\alpha; \, 
\sum_{i=1}^n p^w_i\big(Z_1,\ldots,Z_n,(x,y)\big) \delta_{|Y_i - \mu_0(X_i)|}
+ p^w_{n+1}\big(Z_1,\ldots,Z_n,(x,y)\big) \delta_\infty \bigg). 
$$
By Theorem \ref{thm:weight_conf}, this has coverage at least $1-\alpha$,
conditional on \smash{$(X^0_1,Y^0_1),\ldots,(X^0_{n_0},Y^0_{n_0})$}.
\end{remark}

\subsection{Proof of Corollary \ref{cor:weight_conf_cs}}
\label{sec:cor_proof}

We return to the case of covariate shift, and show that Corollary
\ref{cor:weight_conf_cs} follows from the general weighted conformal
result. By Lemma \ref{lem:weight_exch}, we know that the independent draws $Z_i
= (X_i,Y_i)$, $i=1,\ldots,n+1$ are weighted exchangeable with $w_i\equiv 1$ for 
$i=1,\ldots,n$, and $w_{n+1}((x,y))=w(x)$. In this special case, the
probabilities in \eqref{eq:weight_prob} simplify to  
$$
p^w_i(z_1,\ldots,z_{n+1}) = \frac{\sum_{\sigma : \sigma(n+1)=i}
  w(x_i)}{\sum_\sigma w(x_{\sigma(n+1)})} =
\frac{w(x_i)}{\sum_{j=1}^{n+1}w(x_j)}, \; i=1,\ldots,n+1,
$$ 
in other words, \smash{$p^w_i(Z_1,\ldots,Z_n,(x,y)) = p^w_i(x)$}, 
$i=1,\ldots,n+1$, where the latter are as in \eqref{eq:weight_prob_cs}. Applying
Theorem \ref{thm:weight_conf} gives the result.    

\section{Discussion}
\label{sec:discuss}

We described an extension of conformal prediction to handle weighted
exchangeable data.  This covers exchangeable data, and independent (but not
identically distributed) data, as special cases.  In general, the
new weighted methodology requires computing quantiles of a weighted discrete 
distribution of nonconformity scores, which is combinatorially hard.  But the
computations simplify dramatically for a case of significant practical
interest, in which the test covariate distribution \smash{$\widetilde{P}_X$}
differs from the training covariate distribution $P_X$ by a known likelihood
ratio \smash{$\d \widetilde{P}_X/\d P_X$} (and the conditional distribution  
$P_{Y|X}$ remains unchanged).  In this case, known as covariate
shift, the new weighted conformal prediction methodology is just as easy, 
computationally, as ordinary conformal prediction.  When the likelihood ratio
\smash{$\d \widetilde{P}_X/\d P_X$} is not known, it can be estimated given access
to unlabeled data (test covariate points), which we showed empirically, on a
low-dimensional example, can still yield correct coverage.  

Beyond the setting of covariate shift that we have focused on (as the
main application in this paper), our weighted conformal methodology can be
applied to several other closely related settings, where ordinary conformal
prediction will not directly yield correct coverage. We discuss three such
settings below.    

\paragraph{Graphical models with covariate shift.} 

Assume that the training data $(Z,X,Y) \sim P$ has the Markovian
structure $Z \to X \to Y$. As an example, to make matters concrete, suppose $Z$
is a low-dimensional covariate (such as ancenstry information), $X$ is a
high-dimensional set of features for a person (such as genetic measurements),
and $Y$ is a real-valued outcome of interest (such as life expectancy). Suppose
that on the test data \smash{$(Z,X,Y) \sim \widetilde{P}$}, the distribution of
$Z$ has changed, causing a change in the distribution of $X$, and thus causing a
change in the distribution of the unobserved $Y$ (however the distribution of 
$X|Z$ is unchanged). One plausible solution to this problem would be to just
ignore $Z$ in both training and test sets, and run weighted conformal prediction
on only $(X,Y)$, treating this like a usual covariate shift problem. But, as $X$
is high-dimensional, this would require estimating a ratio of two
high-dimensional densities, which would be difficult. As $Z$ is
low-dimensional, we can instead estimate the weights by estimating the
likelihood ratio of $Z$ between test and training sets, which follows because
for the joint covariate $(Z,X)$,
$$
\frac{\widetilde{P}_{Z,X}(z,x)}{P_{Z,X}(z,x)} = 
\frac{\widetilde{P}_Z(z) P_{X|Z=z}(x)}{P_Z(z)P_{X|Z=z}(x)} =
\frac{\widetilde{P}_Z(z)}{P_Z(z)}. 
$$
This may be a more tractable quantity to estimate for the purpose of weighted
conformal inference. These ideas may be generalized to more complex graphical 
settings, which we leave to future work.

\paragraph{Missing covariates with known summaries.} 

As another concrete example, suppose that hospital A has collected a private
training data set $(Z,X,Y) \sim P^A$ where $Z \in \{0,1\}$ is a sensitive
patient covariate, $X \in \R^d$ represents other covariates, and $Y \in \R$ is a
response that is expensive to measure.  Hospital B also has its own data set,
but in order to save money and not measure the responses for their patients, it
asks hospital A for help to produce prediction intervals for these
responses. Instead of sharing the collected data $(Z,X) \sim P^B$ for each
patient with hospital A, due to privacy concerns, hospital B only provides
hospital A with the $X$ covariate for each patient, along with a summary
statistic for $Z$, representing the fraction of $Z$ values that equal one (more
accurately, the probability of drawing a patient with $Z=1$ from their
underlying patient population). Assume that \smash{$P^A_{X|Z}=P^B_{X|Z}$} (e.g.,
if $Z$ is the sex of the patient, then this assumes there is one joint
distribution on $X$ for males and one for females, which does not depend on
the hospital). The likelihood ratio of covariate distributions thus again
reduces to calculating the likelihood ratio of $Z$ between $P^B$ and $P^A$,
which we can easily do, and use weighted conformal prediction.

\paragraph{Towards local conditional coverage?} 

We finish by describing how the weighted conformal prediction methodology can be
used to construct prediction bands that satisfy a certain approximate
(locally-smoothed) notion of conditional coverage.  Given i.i.d.\ points
$(X_i,Y_i)$, $i=1,\ldots,n+1$, consider instead of our original goal
\eqref{eq:goal},  
\begin{equation}
\label{eq:cond}
\P\Big\{ Y_{n+1} \in \widehat{C}_n(x_0) \, \Big| \, X_{n+1}=x_0 \Big\} 
\geq 1-\alpha.  
\end{equation}
This is (exact) conditional coverage at $x_0 \in \R^d$.  As it turns out,
maintaining that \eqref{eq:cond} hold for almost all\footnote{This is meant to
  be interpreted with respect to $P_X$, i.e., for all $x_0 \in \R^d$ except on a
  set whose probability under $P_X$ is zero.} $x_0 \in \R^d$ and all
distributions $P$ is far too strong:
\citet{vovk2012conditional,lei2014distribution} prove that any method with such 
a property must yield an interval \smash{$\widehat{C}_n(x_0)$} with infinite
expected length at any non-atom point\footnote{This is a point $x_0$ in the
  support of $P_X$ such that \smash{$P_X\{B_r(x_0)\} \to 0$} as $r \to 0$, where 
  \smash{$B_r(x_0)$} is the ball of radius $r$ centered at $x_0$.} $x_0$, for
any underlying distribution $P$.  

Thus we must relax \eqref{eq:cond} and seek some notion of approximate
conditional coverage, if we hope to achieve it with a nontrivial prediction
band. Some relaxations were recently considered in \citet{barber2019limits},
most of which were also impossible to achieve in a nontrivial way. A different,
but natural relaxation of \eqref{eq:cond} is 
\begin{equation}
\label{eq:local}
\frac{\int \P\big( Y_{n+1} \in \widehat{C}_n(x_0) \, | \, X_{n+1}=x \big)
K\big(\frac{x-x_0}{h}\big)\, \d P_X(x)}
{\int K\big(\frac{x-x_0}{h}\big) \, \d P_X(x)} \geq 1-\alpha, 
\end{equation}
where $K$ is kernel function and $h>0$ is bandwidth parameter.  Here we are
asking for a prediction band whose average conditional coverage, in some
locally-weighted sense around $x$, is at least $1-\alpha$.  We can equivalently
write \eqref{eq:local} as      
\begin{equation}
\label{eq:local2}
\P\Big\{ Y_{n+1} \in \widehat{C}_n(x_0) \, \Big| \, X_{n+1} = x_0 + h\omega
\Big\} \geq 1-\alpha, 
\end{equation}
where the probability is taken over the $n+1$ data points and an independent
draw $\omega$ from a distribution whose density is proportional to $K$.  (For
example, when we take $K(x) = \exp(-\|x\|_2^2/2)$, the Gaussian kernel, we have
$\omega \sim N(0,I)$.) As we can see from \eqref{eq:local} (or
\eqref{eq:local2}), this kind of locally-weighted guarantee should be close to a
guarantee on conditional coverage, when the bandwidth $h$ is small.

In order to achieve \eqref{eq:local} in a distribution-free manner, we can
invoke the weighted conformal inference methodology.  In particular, note that
we can once more rewrite \eqref{eq:local2} as
\begin{equation}
\label{eq:local3}
\P_{x_0} \Big\{ Y_{n+1} \in \widehat{C}_n(\widetilde{X}_{n+1}) \Big\} \geq
1-\alpha,   
\end{equation}
where the probability is taken over training points $(X_i,Y_i)$, $i=1,\ldots,n$,
that are i.i.d.\ from $P = P_X \times P_{Y|X}$ and an independent test point
\smash{$(\widetilde{X}_{n+1},Y_{n+1})$}, from \smash{$\widetilde{P} =
\widetilde{P}_X \times P_{Y|X}$}, where \smash{$\d \widetilde{P}_X/\d P_X 
\propto K((\cdot - x_0)/h)$}.  The subscript $x_0$ on the probability operator
in \eqref{eq:local3} emphasizes the dependence of the test covariate
distribution on $x_0$.  Note that this precisely fits into the covariate shift
setting \eqref{eq:cov_shift}. To be explicit, for any score function $\score$,
and any $\alpha \in (0,1)$, given a center point $x_0 \in \R^d$ of interest,
define
$$
\widehat{C}_n(x) = 
\Bigg\{ y \in \R: V_{n+1}^{(x,y)} \leq \quant\Bigg(1-\alpha; \, 
\frac{\sum_{i=1}^n K\big(\frac{X_i-x_0}{h}\big) \delta_{V_i^{(x,y)}} + 
K\big(\frac{x-x_0}{h}\big) \delta_\infty}
{\sum_{i=1}^n K\big(\frac{X_i-x_0}{h}\big) + 
K\big(\frac{x-x_0}{h}\big)} \Bigg) \Bigg\}, 
$$
where \smash{$V_i^{(x,y)}$}, $i=1,\ldots,n+1$, are as in \eqref{eq:scores}. 
Then by Corollary \ref{cor:weight_conf_cs},   
\begin{equation}
\label{eq:local4}
\P_{x_0} \Big\{ Y_{n+1} \in \widehat{C}_n(X_{n+1}; x_0) \Big\} \geq 1-\alpha.   
\end{equation}
This is ``almost'' of the desired form \eqref{eq:local3} (equivalently
\eqref{eq:local}, or \eqref{eq:local2}), except for one important caveat. 
The prediction band \smash{$\widehat{C}_n(\,\cdot\, ; x_0)$} in
\eqref{eq:local4} was constructed based on the center point $x_0$ (as suggested
by our notation) to have the property in \eqref{eq:local4}.  If we were to ask
for local conditional coverage at a new point $x_0$, then the entire band
\smash{$\widehat{C}_n(\cdot; x_0)$} must change (must be recomputed) in order to
accommodate the new guarantee.  

In other words, we have not provided a recipe for the construction of a single
band \smash{$\widehat{C}_n$} based on the training data $(X_i,Y_i)$, $i=1,\ldots,n$
that has the property \eqref{eq:local3} at all $x_0 \in \R^d$ (or almost
all $x_0 \in \R^d$). We have only described a way to satisfy such a property if
the center point $x_0$ were specified in advance. 
Combined with our somewhat pessimistic results in \citet{barber2019limits}, 
many important practical and philosophical problems in assumption-lean
conditional predictive inference remain open.  

\subsection*{Acknowledgements}

The authors thank the American Institute of Mathematics for supporting and
hosting our collaboration. R.F.B.\ was partially supported by the National
Science Foundation under grant DMS-1654076 and by an Alfred P.\ Sloan
fellowship.  E.J.C.\ was partially supported by the Office of Naval Research
under grant N00014-16-1-2712, by the National Science Foundation under grant
DMS-1712800, and by a generous gift from TwoSigma.  R.J.T.\ was partially
supported by the National Science Foundation under grant DMS-1554123.

\bibliographystyle{plainnat}
\bibliography{refs}

\end{document}